\documentclass[conference,10pt]{IEEEtran}
\makeatletter
\def\ps@headings{%
\def\@oddhead{\mbox{}\scriptsize\rightmark \hfil \thepage}%
\def\@evenhead{\scriptsize\thepage \hfil \leftmark\mbox{}}%
\def\@oddfoot{}%
\def\@evenfoot{}}
\makeatother
\pagestyle{headings}

\usepackage{latexsym}
\usepackage{amsmath}
\usepackage{amssymb}
\usepackage{amsthm}
\usepackage{epsfig}
\usepackage{algorithm}
\usepackage{algorithmic}
\usepackage{color}
\usepackage{tikz}
\usepackage{mathdots}

\usepackage{multicol}

\usetikzlibrary{calc}
\usetikzlibrary{shapes}
\usetikzlibrary{shadows}

\newtheorem{theorem}{Theorem}

\newtheorem{fact}{Fact}

\def\N{\mathbb{N}}

\def\R{\mathbb{R}}

\def\O{\mathcal{O}}
\def\1{\mathbf{1}}

\def\NP{\mathbf{NP}}

\def\rnd{\textsc{RND}}

\def\sp{\mathrm{SP}}
\def\tr{\mathrm{TR}}
\def\hub{\mathrm{HUB}}
\def\hh{\mathrm{HH}}
\def\vpn{\mathrm{VPN}}

\def\srlb{\mathrm{SRLB}}

\def\rndsp{\rnd_{\sp}}
\def\rndtr{\rnd_{\tr}}
\def\rndhub{\rnd_{\hub}}
\def\rndhh{\rnd_{\hh}}

\def\rndx{\rnd_{X}}

\def\trunc{\mathrm{trunc}}

\def\dist{\mathrm{dist}}

\def\conv{\mathrm{conv}}

\usepackage{fmtcount}
\begin{document}

\title{Shortest Path versus Multi-Hub  Routing in Networks with Uncertain Demand}
\author{
\IEEEauthorblockN{
Alexandre Fr\'echette\IEEEauthorrefmark{1},
F. Bruce Shepherd\IEEEauthorrefmark{2},
Marina K. Thottan\IEEEauthorrefmark{3} and
Peter J. Winzer\IEEEauthorrefmark{4}
}
\IEEEauthorblockA{\IEEEauthorrefmark{1}Department of Computer Science, University of British Columbia, Canada$^{1}$}
\IEEEauthorblockA{\IEEEauthorrefmark{2}Department of Mathematics and Statistics, McGill University, Canada}
\IEEEauthorblockA{\IEEEauthorrefmark{3}Bell Labs, Emerging Technology Division, Murray Hill, NJ, USA}
\IEEEauthorblockA{\IEEEauthorrefmark{4}Bell Labs, Optical Transmission and Networks Research Department, Holmdel, NJ, USA}
}

\maketitle

\addtocounter{footnote}{1}
\footnotetext[\value{footnote}]{This author's work was done while studying at McGill University, and with generous funding from NSERC Discovery Grant No. 212567.}

\begin{abstract}
We study a class of robust network design problems motivated by the
need to scale core networks to meet increasingly dynamic capacity
demands. Past work has focused on designing the network to support
all hose matrices (all matrices not exceeding marginal bounds at the
nodes). This model may be too conservative if additional
information on traffic patterns is available. Another extreme is the
fixed demand model, where one designs the network to support
peak point-to-point demands. We introduce a capped hose model to
explore a broader range of traffic matrices which includes the above two as special cases.
 It is known that optimal designs for the hose model are always determined by
single-hub routing, and for the fixed-demand model are based on
shortest-path routing.
We shed light on the wider space of capped
hose matrices in order to see which traffic models are more
shortest path-like as opposed to hub-like. To address the space in between,
we use hierarchical multi-hub routing templates, a generalization of hub and tree routing. In particular, we show that by adding peak capacities into the hose model, the single-hub tree-routing template is no longer cost-effective.
This initiates the study of a class of robust network design ($\rnd$) problems restricted to these templates.
Our empirical analysis
is based on a  heuristic for this new hierarchical $\rnd$ problem. We also propose that it is possible to define a routing indicator that accounts for the strengths of the marginals and peak demands and use this information to choose the appropriate routing template.
We benchmark our approach against other well-known routing
templates, using representative carrier networks and a variety of different capped hose traffic demands, parameterized by the relative importance of their marginals as opposed to their point-to-point peak demands. This study also reveals conditions under which multi-hub routing gives improvements over single-hub and shortest-path routings.
\end{abstract}

\section{Introduction}
Traditional communication networks are designed based on  knowledge of an {\em expected traffic demand matrix} that specifies the aggregate traffic between every pair of nodes in the network and evolves slowly, over a time scale of months or years. In an era of data-dominant communication networks with dynamic demand patterns, this traditionnal network design methodology loses its cost-effectiveness. For instance, in emerging data services such as YouTube, user generated video content and flash crowds are triggering an increasing amount of dynamic uncertainty in the traffic demand. One specific concern is that the real-time estimation of the dynamically changing traffic patterns in large data networks is intractable. Moreover, a pessimistic approach of designing for peak point-to-point demands (also referred to as the fixed-demand model) throws away critical information for achieving ``cost-sharing'' between the varying demand patterns that arise over time.

A traffic model that has gained considerable popularity in coping with dynamic demands is the {\em hose model} \cite{duffield2002resource,fingerhut1997designing}.
In this model, only bounds  on the ingress/egress traffic of the nodes are known.  These bounds (called {\em marginals}) are physically represented by the various interface speeds of networking hardware at any given node.
The actually encountered point-to-point traffic distributions are subject to dynamic variations during network operation, but the network is provisioned to be able to support any possible pattern meeting the ingress/egress bounds. Formally, in the symmetric (undirected) demand case, the space of hose matrices $\mathcal{H}_U$ consists of symmetric
matrices $D$ satisfying the bound: $\sum_j D_{ij} \leq U(i)$ for each node $i$, with marginal $U(i)$; in addition, the main diagonal of these matrices is typically set to zero, reflecting the fact that nodes generally do not send traffic to themselves. However, the flexibility of the hose model can also be a drawback since totally random demand matrices rarely ever occur in practice, and the operator usually has additional at least some a priori information about the traffic pattern. For instance, it may happen that there exists some additional peak demand $U(i,j)$ between nodes $i,j$, i.e. $D_{ij} \leq U(i,j)$. We call this union of marginals and peak demands the {\em capped hose model}. Depending on the choice of $U(i,j)$, the space of traffic patterns spanned by the capped hose model ranges from purely deterministic point-to-point demands ($D_{ij}=U(i,j)$) to the space of unconstrained hose traffic ($U(i,j)=\min(U(i),U(j))$). Designing a network for the capped hose model hence avoids the massive over-provisioning that may arise for unconstrained hose traffic \cite{shepherd2006selective}.

In order to solve the routing problem for the capped hose model, we propose the use of a class of oblivious routing strategies, called {\em hierarchical hubbing routing templates} ($\hh$), which generalizes the popular hub and tree routing templates.
We give empirical results based on a heuristic for the corresponding new class of robust network design ($\rnd$) problems.

In order to quantify the extent of the space of traffic matrices spanned by the capped hose model,
 we propose indicators (marginal versus peak strength measures) that could be used as a predictor for which traffic scenarios favour a shortest path routing ($\sp$), and which favour $\hh$ routings.
We also show that $\hh$ is often superior in general to both $\sp$ and hub routing ($\hub$), in terms of network design (considering both link and node costs).

\subsection{Summary of Main Contributions}
\begin{enumerate}
\item We systematically study the capped hose model as a generalization of the fixed-demand and the unconstrained hose models. This allows tapping into a whole range of practically relevant traffic scenarios.
\item We initiate a formal study of a class of $\rnd$ problems based on oblivious $\hh$ routing templates, a generalization of tree and single-hub routing used by optimal virtual private networks ($\vpn$s). 
We provide theoretical examples showing that $\sp$ is much cheaper than $\hh$, and vice-versa.
\item We perform an empirical study (based on both randomly generated and real traffic data) on the space of new capped hose traffic polytopes.
\item Our initial empirical findings show that for most instances, $\hh$ is better than single-hub $\vpn$, suggesting that multi-hubbing at different layers can be important for network cost savings. Empirical results suggest that for many instances $\hh$ is significantly better than $\sp$.
\item We provide some initial understanding of which traffic scenarios favour $\sp$-like, and which are more $\hh$-like. This is achieved by introducing two metrics that quantify the strengths of the marginals and peak demands in the specification of a collection of traffic matrices. These indicators could be used as a predictor for which of the two routing templates, $\sp$ and $\hh$, is better.
\end{enumerate}

\section{Traffic Engineering for Core Networks}

\subsection{Oblivious Routing and Routing Templates}

Apart from the {\em traffic modelling} complexities mentioned above, which form the basis for network infrastructure {\em planning} and {\em design}, there is
a second {\em operational} complication arising in any network with dynamically changing traffic patterns: how should traffic be routed on an installed infrastructure? Since centralized or distributed control planes may not have the speed and flexibility to adapt dynamically to rapidly varying traffic patterns without the risk of inducing detrimental oscillations \cite{gao2006avoiding},
the vast majority of today's networks are designed based on the
principle  of {\em oblivious
routing}, usually based on $\sp$, both on the
circuit and on the packet layer.

Informally,
Valiant \cite{valiant1982scheme} defined oblivious routing as ``the route taken by each packet be determined entirely by itself. The other packets can only
influence the rate at which the route is traversed.'' Formally, this amounts to specifying a {\em template}:
for each possible communicating pair of nodes $i,j$, one specifies a path $P_{ij}$.\footnote{Valiant actually considered the randomized version and so the flows could be fractional.}
Traffic from node $i$ to $j$ is always routed on $P_{ij}$ independent of any other demands or
congestion in the network.
Oblivious routing is attractive because it enables strictly local routing
decisions. In contrast, stable non-oblivious routing would require a global network control plane performing path
optimization on a time scale much faster than the congestion and traffic variation dynamics, which is impractical in most scenarios.

For the {\em fixed-demand traffic model}, a minimum cost design  is
achieved using an oblivious $\sp$  routing template.
 Using $\sp$ for hose traffic, however, may result in significant over-provisioning of
network resources (if blocking is to be avoided)\cite{shepherd2006selective}. On the other hand,
it was proved in \cite{goyal2008vpn}
that  the optimal design (in the undirected setting) for the {\em hose model} is always
induced by  a tree routing template  (i.e. there exists a network-specific tree, and the template is to use the path $P_{ij}$ in that tree between node $i,j$). In  \cite{fingerhut1997designing,gupta2001provisioning}
a simple polynomial-time algorithm was given to produce the optimal tree template.
Their analysis shows that  the resulting network (called a virtual private network - $\vpn$) actually has enough capacity to support
{\em hub routing} ($\hub$) within the tree. That is, there is a hub node $h$ and enough capacity in the tree for  each node to reserve a private  circuit
of size $U(i)$ to $h$. We use $\hub$ to denote such routing templates.

\subsection{The Capped Hose Model}
\label{sec:cappedhose}
We now consider the class of { \em capped hose matrices } and the minimum cost network to support it.
\begin{itemize}
\item If the node bounds
$U(i)$ are sufficiently large (at least $\sum_j U(i,j)$),   then they impose no constraints on the routed demand.
Hence the network is only required to
support the single fixed-demand matrix $U(i,j)$.  In this case, we know that
an optimal design  results from $\sp$ routing.
\item If the point-to-point $U(i,j)$ are large (at least $\min\{U(i),U(j)\}$), then they impose no constraint. Hence the   network must support all hose matrices associated
with the marginals $U(i)$. In this case, we know that $\hub$ routing is optimal.
\item In the general case where neither marginals nor peak demands clearly dominate the space of possible traffic matrices, we see that there are gaps at the two ends of the spectrum: at the fixed-demand end, we may  incur large penalties if we design the network via hub routing. Similarly, at the unconstrained hose-demand end we may incur large penalties if we use $\sp$ (see Section~\ref{sec:gaps}).
\end{itemize}
One would normally assume that a node's marginal is big enough to allow the routing of any of its peak demands. Hence, we call $[\max_j U(i,j),\sum_j U(i,j)]$ the {\em relevance interval} of a marginal $U(i)$ for a given set of peak demands. A similar reasoning for peak demands suggests that $U(i,j)$'s relevance interval is $[0,\min(U(i),U(j))].$

In order to understand the spectrum of traffic matrices between these  extremes it is necessary to consider more general oblivious routing strategies. To solve this problem, we propose the use of $\hh$ (see Section \ref{sec:HH}).  On the other hand, to characterize when we are close to the extremes and can rely on standard $\hub$ and $\sp$ routings, we introduce a pair of metrics.
\subsection{Marginal and Peak Demand Strength Metrics}
\label{sec:strength}
For any capped hose model, we propose two vector measures $\pi$ (for peak) and $\mu$ (for marginal) to indicate the relevance of the different traffic measurements
in defining a given collection of traffic  patterns. For example, if $\mu(i)$ is large, it means that the marginal $U(i)$ should play a significant role. On the other hand if $\pi(i,j)$ is large, the peak demand $U(i,j)$ has a more significant role in determining a cost efficient routing template. One aspect of this study is to examine whether these $\pi$ and $\mu$ measures can be used to classify traffic instances as being more favourable for a $\sp$ or  hub-like routing template.


Given marginals and peak demands, we let the \emph{strength} of node $i$'s marginal be:
$$
\mu(i) = 1-\frac{\trunc(U(i)) -\max_j U(i,j)}{\sum_j U(i,j) - \max_j U(i,j) }.
$$
where $\trunc(U(i)) = \min(U(i), \sum_j U(i,j))$.
 Similarly, the \emph{strength} of the peak demand between two nodes $i,j$ is:
$$
\pi(i,j) = 1-\frac{\trunc(U(i,j))}{\min(U(i),U(j)}.
$$
where $\trunc(U(i,j)) = \min(U(i,j), \min(U(i),U(j)))$. These metrics are faithful to our observations  and range from $0$ (weak) to $1$ (strong).

We thus define our strength vectors $\mu\in \R^n$ for marginals and $\pi \in \R^{n\times n}$ for peak demands, with $\mu_i = \mu(i)$ and $\pi_{i,j} = \pi(i,j)$ respectively. A capped hose instance that is hose-like should have relatively high marginal strengths, and if it is peak-demand like it should have high peak demand strengths if it is peak-demand-like. Based on this intuition, we attempt to  classify a capped hose instance by the Euclidean norm of these vectors.

\subsection{Hierarchical Hub Routing}

It is well-known that designing a network for hose matrices is quantitatively similar (e.g., up to a factor 2 in capacity) to designing for a single ``uniform multiflow'' instance. Since uniform multiflow instances can be viewed as a convex combination of hub routings (\cite{valiant1982scheme,shepherd2006selective}),  it seems natural that one of these hubs  would deliver  a good overall network design in terms of link capacity cost. However, in the capped hose model, we may ``punch holes'' into the space of demand matrices by letting some $U(i,j)$ be much smaller than others. This allows for the existence of certain regions within which there is much denser traffic than between such regions.
Since these regions  have their own capacity sharing benefits, one
should no longer expect a single hub, but multiple hubs, each serving its own region of dense traffic.

This forms one motivation for our choice  of {\em  hierarchichal hub routing templates} ($\hh$).
Each $\hh$ template is induced by an associated {\em hub tree}. If the hub tree is a star, it corresponds to standard hub routing. If the hub tree has more layers, its internal   nodes represent possible hub nodes for different subnetworks (these subnets are necessarily nested due to the tree structure).
A detailed description is given in Section \ref{sec:HH}.

\subsection{The Robustness Paradigm}
We are given an undirected network topology with per-unit costs of reserving bandwidth. In addition we are given a space of demand matrices (in our case, the  capped hose model) which the network must support
via oblivious routing. This space, or ``traffic model'', is meant to capture the time series of all possible
demand matrices which the network may encounter. The goal is to find a routing template
which minimizes the overall cost of reserved capacity needed to support all demands in the traffic space. The formal definition of robust network design (RND) appears in Section~\ref{sec:RNDmodel}.

Due to the optimality arguments of various routing templates for the traffic models under consideration, we focus on RND restricted to the routing templates $\sp$, $\hub$, $\tr$ (tree routing) and $\hh$. For
a class of templates $X$, we denote by $\rndx$ the  RND problem where one must restrict to the
corresponding class of templates. While $\rndsp$, $\rndtr$ and $\rndhub$
 have been
well-studied, we initiate a formal analysis of $\rndhh$ in Section~\ref{sec:HH}.
We also propose a heuristic for this version in Section~\ref{sec:heuristic}.

\subsection{Related Work}
Oblivious routing approaches to network optimization have been used
in many different contexts from switching (\cite{mitra1987randomized,widjaja2003exploiting,keslassy2003scaling})
to overlay networks
(\cite{kodialam2004efficient,zhang2004designing}), or fundamental
tradeoffs in distributed computing \cite{valiant1982scheme,racke2002minimizing} to
name a few. In each case mentioned above, the primary performance
measure is network congestion (or its dual problem throughput). In
fact, one could summarize the early work by Valiant, Borodin and
Hopcroft as saying that randomization is necessary and sufficient
for oblivious routing  to give $\O(\log n)$ congestion in
many packet network topologies; $n$ being the number of nodes.

The present work's focus is not on congestion, but on total link capacity cost.
This falls in the general space of $\rnd$ problems, which includes
the Steiner tree and VPN problems as special cases.
Exact or constant-approximation polytime algorithms for several other
important traffic models   have also been designed, e.g. the so-called symmetric and
asymmetric hose models \cite{gupta2007approximation,olver10approx}.
The latter is however $\NP$-hard to approximate in undirected graphs within
polylogarithmic factors for general polytopes \cite{olver10approx}.

Our work was partly motivated by  work in  \cite{shepherd2006selective}, where
Selective Randomized Load Balancing ($\srlb$) is used to
design minimum-{\em cost} networks. They showed that networks whose design is based on oblivious
routing techniques (and specifically $\srlb$)
can be
ideal to capture cost savings in IP networks. This is  because optimal designs reserve capacity
on long paths, where one may  employ
high-capacity optical circuits, and partially avoid expensive IP equipment costs at internal nodes
(their empirical study  incorporated  IP router, optical switching, and fiber costs).
In this paper, instead of a design based on
hub  routing to a small number of hubs, we consider more
general $\hh$, as discussed in \cite{olver10approx}.

Our work also extends the long stream of work on designing $\vpn$s \cite{duffield2002resource,gupta2001provisioning,goyal2008vpn,grandoni07vpn,ghobadi2008resource}.
Previous work has focused on provisioning for a $\vpn$ based on either the hose
or the fixed-demand models (c.f. \cite{kumar2001algorithms}). Designing for more general traffic
models in this context has not received  much attention. We take an intermediate step by examining
the capped hose model
which to the best of our knowledge has not been studied.

With some thought, the new class of $\rndhh$ problems is almost equivalent
to solving $\rnd$ but only using  tree routing in the metric completion of the graph.
The latter approach is exactly the idea used by Anupam Gupta (unpublished) to show an $\O(\log n)$-approximation
for general $\rnd$ via metric tree embeddings (cf. a long version of \cite{goyal2009dynamic} contains full details).

\section{The Robust Network Design (RND) Model}
\label{sec:RNDmodel}
To understand the hierarchical hub routing problem we first introduce the robust network design
 paradigm in full generality.

\subsection{Formal Definition}
\label{sec:rndoverview}
An instance of $\rnd$ consists of a {\em
network topology} (undirected graph) $G=(V,E)$, as well as per-unit
costs $c(e)$ of bandwidth on each link $e \in E$ (the edge costs may be implicit if $G$ is weighted).
In addition, we are given a convex region $\mathcal{D} \subseteq \mathbb{R}^{V \times
V}$.\footnote{We always assume that $\mathcal{D}$ is itself well-described in
the sense that we can solve its separation problem efficiently, e.g.
in polytime.} The region $\mathcal{D}$ represents the {\em traffic model} or  {\em demand universe}, that is, the set of demands
which have to be supported. In other words,  any matrix
$D \in \mathcal{D}$ must be routable in the capacity we buy on $G$. The
cost of the network increases as this space grows.

One may define  several versions of $\rnd$ depending on assumptions
on the routing model. Most common in practice (and the focus of this paper) is the {\em oblivious
routing} model described above. These paths often need to satisfy additional requirements, such as be shortest paths in $\sp$ or form a tree in $\tr$. This is specified by a class of template $X$.

Our main objective in the context of $\rnd$
is then to compute an oblivious routing template in a specific class of templates which leads to a link
capacity vector $\mathsf{cap}_e$ whose overall cost is minimized (as we discuss
next). We use $\rndx(G,c,\mathcal{D})$ to denote the cost of an optimal solution

Given  a fixed routing template $\mathcal{P}=(P_{ij}: \forall ~i,j)$, the
cost of supporting all demands  in the demand polytope $P$ can be
computed in a straightforward fashion. For each edge $e \in E(G)$,
we compute a ``worst demand'' matrix in $P$ in terms of routing on
$e$ (using the routing $\mathcal{P}$). This gives rise to the
following sub-optimization:
\begin{equation}
\mathsf{cap}_e (\mathcal{P},\mathcal{D}) := \max_{D \in \mathcal{D}} ~\sum_{\substack{i,j \in V: \\ e \in P_{ij}}} D_{ij}.
\end{equation}

\noindent $\rnd$ then asks for a template $\mathcal{P}$ that minimizes the cost of the overall required capacity:
$$
\rndx(G,c,\mathcal{D}) = \min_{\mathcal{P} \in X} ~\sum_{e\in E} c(e) \mathsf{cap}_e(\mathcal{P},\mathcal{D}) .
$$

\subsection{The Hierarchical Hub Routing Template ($\hh$)}
\label{sec:specialrouting}

We make heavy use of the generalization  $\hh$ \cite{olver10approx}.
Such oblivious routing templates are derived from a tree $T$ whose leaves
consist of the terminal nodes, that is $V(G)$ without loss of generality. We call this the {\em hub tree}.
The internal nodes are ``abstract'' in the sense that they represent nodes (yet to be identified in $G$)
which will act as hubs for certain subsets of terminal nodes. For instance,
for any non-leaf node $v \in T$, let $T_v$ be the subtree rooted at $v$ dangling below it.
The interpretation is that the leaves $L_v$ in $T_v$ represent a region of terminal nodes which should have
a dedicated hub (this hub will correspond to $v$).
The family of regions arising from a tree obviously forms a nested
 family (Figure 1).

Obtaining an oblivious routing template from the $\hh$ class is a two step process. First, we first need to pick a hub tree, and then map its internal node to physical nodes. In this second stage, we find a map from each internal node $v$ to a real node in $G$ to act as a hub. Such
 a \emph{hub map} $\eta:V(T) \rightarrow V(G)$ must satisfy $\eta(v)=v$ for each leaf of $T$.
Given such a mapping, there is an {\em induced} oblivious routing template  defined as follows:
for each pair of leaf nodes $i,j$ in $T$, let $i=v_0,v_1,v_2, \ldots ,v_l=j$ be the unique $ij$ path in $T$.
For each $q=0,1, \ldots, l-1$, let $P_q$ denote a shortest $\eta(v_q)\eta(v_{q+1})$ path in $G$. Then
we define $P_{ij}$ to be the concatenation of paths $P_0,P_1, \ldots ,P_{l-1}$
(resulting in a possibly non-simple path). Naturally, the special case of $\hub$ routings corresponds to taking $T$ equal to a star.
\begin{figure}
\begin{center}
\begin{tikzpicture}[scale=0.35]
\node[inner sep=1pt,draw,circle](v1) at (0,0){};
\node[inner sep=1pt,draw,circle](v2) at (1,0){};
\node[inner sep=1pt,draw,circle](v3) at (0.5,-0.5){};

\node[inner sep=1pt,draw,circle](v4) at (4,1){};
\node[inner sep=1pt,draw,circle](v5) at (4.5,1.5){};

\node[inner sep=1pt,draw,circle](v6) at (-2,3){};
\node[inner sep=1pt,draw,circle](v7) at (-2.5,2){};

\draw[dashed] (0.5,-0.25) circle (1);

\draw[dashed] (4.25,1.25) circle (1);

\draw[dashed] (-2.25,2.37) circle (1);

\draw[dashed] (-1,1) circle (3.5);

\draw[dashed] (0,1) circle (5.5);

\node[inner sep=1pt,draw,circle,fill=black](h1) at (0.5,0){};
\draw (h1)--(v1);
\draw (h1)--(v2);
\draw (h1)--(v3);

\node[inner sep=1pt,draw,circle,fill=black](h2) at (-1.6,2.35){};
\draw (h2)--(v6);
\draw (h2)--(v7);

\node[inner sep=1pt,draw,circle,fill=black](h3) at (-0.55,1.17){};
\draw (h3)--(h1);
\draw (h3)--(h2);

\node[inner sep=1pt,draw,circle,fill=black](h4) at (4,1.75){};
\draw (h4)--(v4);
\draw (h4)--(v5);

\node[inner sep=1pt,draw,circle,fill=black](h5) at (1.75,5){};
\draw (h5)--(h4);
\draw (h5)--(h3);

\end{tikzpicture}
\end{center}
\label{fig:Tree}
\caption{A hub tree. A ``region'' of terminals is represented by a dashed circle. Internal black nodes of the tree are the ones that will be mapped to hubs in the network topology. White leaf nodes are terminals.}
\end{figure}
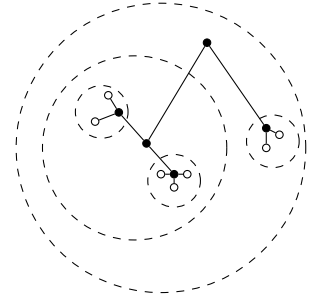
Each hub tree thus  gives
rise to many possible routing templates, one for each mapping $\eta$ (analogous to  the specific choice of hub node in hub routing).
Call this the class of hierarchical hub templates.
Note that, in some scenarios, there are single-path oblivious routing templates that are unachievable by hierarchical hub templates.
For instance in a complete graph on four nodes, with unit weight edges, there is no
hierarchical hub template which corresponds to the shortest path template (each $P_{ij}$ consist of the edge $ij$).

$\hh$ routing also responds to
one practical impediment of hub routing architectures, pointed out in \cite{shepherd2006selective}. Namely,  network providers are generally opposed to having all traffic routed via a single
hub (or a small cluster of hubs) in the center of a large network. In particular, traffic within some local
region should be handled by hubs within their own ``jurisdiction".
Hub trees potentially give a mechanism for specifying these local jurisdictions.

\section{The Optimization Problem $\rndhh$}
\label{sec:HH}

There are several  layers to the optimization problem $\rndhh$. First, we must find a hub tree $T$ - this is the core decision for which we provide a heuristic algorithm in Section \ref{sec:heuristic}. Second, given a hub tree, we must map it to the network graph to obtain a valid routing template. An efficient solution to this mapping problem is known for $T$-topes\footnote{A $T$-tope is a polytope of demands, denoted by $\mathcal{H}_{T,b}$, which are routable on some tree $T$ with edge capacities $b$. This class generalizes hose polytopes, which can be viewed as demands routable on a star.} \cite{olver10approx}; we show that this can be invoked to solve the  problem for general polytopes in Section \ref{sec:hubplacement}. Finally, given a template, we must ultimately solve for capacities it induces, as we discuss in Equation (\ref{eqn:caps}) below.

We  mention that the case where the hub tree is given is of interest
in cases where a  network provider wishes to self-identify a regional clustering
within their network. In other words, they provide the hub tree, and it only   remains to
identify which nodes to act as hubs (i.e. solve the hub placement problem).

In Section \ref{sec:cases} we  discuss several classical  problems which arise as special cases of  $\rndhh$.  Finally, we provide some theoretical bounds between $\rndhh$ and $\rndsp$ in Section~\ref{sec:gaps}.

\subsection{Choosing the Hub Tree}
\label{sec:heuristic}
We now develop  a heuristic for $\rndhh$. It is tailored for the capped hose model,
but could be extended to handle  any traffic polytope. The high-level intuition is to
group nodes so that  traffic  within  a group is large compared to  traffic leaving the group. This
 suggests the need for a hub to handle the group's local traffic.
It is straightforward to show that there is always  an optimal hub tree for $\rndhh$ where the tree is binary (we defer the details to the end of this section).
Hence our algorithm's focus is to find a ``good'' binary hub tree.

To construct a binary tree from the terminal nodes $V$, one could either proceed by a top-down approach (recursively splitting groups of nodes in two) or a bottom-up approach (recursively merging nodes together). The former has the
flavour of repeatedly solving   \textsc{Sparsest Cut}  problems. This problem is $\mathbf{APX}$-hard
and the current best approximation factors are polylogarithmic
 \cite{chawla06NPhard,arora2009expander}. Instead,  we  follow the bottom-up approach. We are thus seeking
to repeatedly merge pairs of node sets which share a lot of traffic with each other.
The key to this calculation is a sparsest cut-like measure which we introduce next.
This is then used in a
 \textit{Binary Tree Sparsest Merging (BTSM)} algorithm.

\noindent
\underline{\sc A sparsity measure}

We devise a sparsity measure that seeks to maximize the  traffic between two groups, compared  to the traffic
sent outside the merged groups.  This measure is very similar to the capacity on a fundamental cut explained in Section \ref{sec:rndoverview}.

In the setting of the capped hose model, the maximum possible demand between a pair of disjoint sets of nodes $A,B \subseteq V$ is given by
\begin{equation*}
\label{eqn:ustar}
\begin{array}{rl}
u^*(A,B) = \text{maximize} &   \sum_{i \in A, j \in B}  D_{ij} \\
\text{subject to}&D_{ij} \leq U_{ij} ~~~~\mbox{for all $i,j$}\\
&\sum_{j} D_{ij} \leq U(i) ~~~~\mbox{for all $i$}\\
& D_{ii} = 0 ~~~~\mbox{for all $i$}\\
&D \geq 0
\end{array}
.
\end{equation*}
This problem can  be solved efficiently, since it can be cast as a $b$-matching or max-flow problem. For instance,
consider a bipartite graph $H$ with bipartition $A,B$. Each edge $ij$ has a capacity of
$U_{ij}$ and is oriented from $A$ to $B$. We also add a source $s$ with edges
$(s,i)$ for $i \in A$ with capacity $U(i)$. Similarly we add sink $t$ with edges from $B$.
A max $st$-flow  in this graph gives precisely the value $u^*(A,B)$.

The \emph{sparsity} of a pair of disjoint sets of nodes $A,B \subseteq V$ is then:
\begin{equation*}
\label{eqn:sparsity}
sc(A,B) = \frac{u^*(A\cup B, V\setminus (A\cup B))}{u^*(A,B)}.
\end{equation*}
One notes that it trades off the flow out of $A \cup B$ versus the flow between $A,B$.

\noindent
\underline{\sc The Sparsest Merging Algorithm}

Our proposed heuristic algorithm (see Algorithm 1) builds up a binary hub tree $T$, starting from
a forest consisting of a set of singletons. At each step it has a forest and looks for
a  pair of rooted trees  that has the minimum sparsity value. Specifically, it computes the sparsity
 measure for  the leaf sets for such a pair of trees.
  The two subtrees are then merged, that is, a new root node is added to $T$ and becomes
the parent of the two previous rooted subtrees.
When the forest becomes a single tree, we stop. Clearly this can be done in polytime.

\begin{algorithm}
\caption{Binary Tree with Sparsest Merging algorithm.}
\label{algo:BTSM}
\begin{algorithmic}
\REQUIRE A set of nodes nodes $V$ with peak demands $U(i,j)$ for each pair of nodes $i,j\in V$ and marginals $U(i)$ for each node $i\in V$.
\ENSURE A binary demand tree $T$.
\STATE $T=(V,\emptyset)$
\STATE $S\leftarrow V(T)$
\WHILE{$|S|>1$}
\STATE Find $i,j\in S$ that has minimum sparse cut value.
\STATE Add a new node $u$ to $T$.
\STATE Connect $u$ to $i$ and $j$.
\STATE Add $u$ to $S$.
\STATE Remove $i$ and $j$ from $S$
\ENDWHILE
\RETURN $T$
\end{algorithmic}
\end{algorithm}

\noindent
\underline{\sc Restricting to Binary Trees.}

In this section, we show that there is always a binary hub tree that is optimal
 for $\rndhh$.

Let $T$ be any hub tree that has at least one node $u$ of degree at least four. Then it is possible to form a new hub tree $T'$ with $c_\hh(T') \leq c_\hh(T)$, and such that the sum of the degrees of nodes with degree greater than four is smaller in $T'$. Hence, repeating the process eventually reduces the number of degree at least four nodes to zero, and we end up with a binary tree.

Consider a  $u$ of $T$ with degree at least four. Replace $u$ by two nodes $u'$ and $u''$ such that $u$'s old parent is now connected to $u'$, and $u'$ is connected to $u$'s first child as well as to  $u''$. Then  $u''$ is connected to the remaining children of $u$ (see Figure 2). Note that the sum of degrees of nodes with degree at least four is decreased by one, since $u'$ has degree three  and $u''$ has degree $\deg(u)-1$.

\begin{figure}
\begin{center}
\begin{tikzpicture}
\node[] (v) at (0,0){};
\node[draw,circle,inner sep=1.5pt,label=left:$u$] (u) at (0,-1){};

\node[] (1) at (-1,-2){};
\node[] (2) at (-0.5,-2){};
\node[] (3) at (0,-2){};
\node[] (4) at (0.5,-2){};
\node[] (5) at (1,-2){};

\draw (v)--(u);
\draw (u)--(1);
\draw (u)--(2);
\draw (u)--(3);
\draw (u)--(4);
\draw (u)--(5);
\end{tikzpicture}
\begin{tikzpicture}
\node[] (v) at (0,0){};
\node[draw,circle,inner sep=1.5pt,label=left:$u'$] (u') at (0,-1){};
\node[draw,circle,inner sep=1.5pt,label=right:$u''$] (u'') at (0.5,-2){};

\node[] (1) at (-0.5,-2){};
\node[] (2) at (0,-3){};
\node[] (3) at (0.25,-3){};
\node[] (4) at (0.75,-3){};
\node[] (5) at (1,-3){};

\draw (v)--(u');
\draw (u')--(u'');
\draw (u')--(1);
\draw (u'')--(2);
\draw (u'')--(3);
\draw (u'')--(4);
\draw (u'')--(5);
\end{tikzpicture}
\end{center}
\label{fig:binaring}
\caption{Dividing internal nodes to get a binary tree.}
\end{figure}
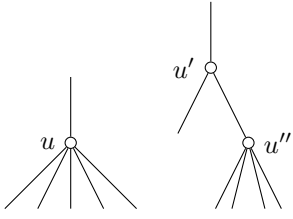

The capacity on the edge between $u$'s parent and $u'$ is equal to the capacity between $u$ and its parent;  the capacity between $u'$ and $u''$ can be inferred using equation \eqref{eqn:caps}. All other capacities remain the same - the capacity between one of $u$'s children and its new parent $u'$ or $u''$ is the same as the capacity between $u$ and that child in $T$. Note that we only split $u$ into two nodes and added enough capacity between those two nodes to route any feasible traffic matrix. Hence we have the new capacitated $T'$ can support exactly the same demand matrices as $T$ (i.e. the $T'$-tope and $T$-topes are the same).

Finally, an optimal hub map of $T$ can be implemented for  $T'$ as follows.
 One  maps $u'$ and $u''$ to the node that $u$ is mapped to. Hence, $T'$ can produce the same oblivious routing as $T$, so its optimal  cost $c_\hh(T')$ is at most $c_\hh(T)$.

\subsection{The Hub Placement Problem}
\label{sec:hubplacement}

Suppose we are given a network graph $G$, hub tree $T$ with leaf set $V(G)$ and edge capacities $u$.
In \cite{olver10approx},  a dynamic programming algorithm is given to find a valid mapping $\eta: V(T) \rightarrow V(G)$ which minimizes
the hub tree's capacity routing cost $c_\hh(T) = \sum_{e=wv \in T}  u(e) \dist_G(\eta(w),\eta(v))$, where $\dist_G(s,t)$ is the length of a shortest $st$-path in $G$ with edge costs $c$. We now describe how this is used for
general hub placement in $\rndhh(G,c,\mathcal{D})$.

Consider any demand between $i,j$. If $e=wv$ lies on the unique $ij$ path in $T$,
 then this demand ultimately routes on a shortest path between $\eta(w),\eta(v)$ in $G$.
Hence if there are demand matrices that route $X$ units of flow on $e$ in $T$, then we need to reserve $X$ units of capacity
on a (presumably shortest) $\eta(w)\eta(v)$ path in $G$. In other words, the $\rndhh$ hub placement  problem reduces
to the dynamic program mentioned above where the $u(e)$ values are  the solution to the following problem:
here, for $e \in T$,
$\mathsf{fund}(e)$ denotes all terminal pairs $i,j$ whose path in $T$ uses edge $e$ ($i$ and $j$ are separated by the fundamental cut induced by $e$).
If $\mathcal{D}$ is the demand polytope,
 then the maximum amount of demand across edge $e$ is
\begin{equation}
\label{eqn:caps}
u_{\mathcal{D},T}(e) := \max_{D \in \mathcal{D}} \sum_{i,j \in  \mathsf{fund}(e)}  D_{ij}.
\end{equation}
Thus a tree $T$ plus the polytope $\mathcal{D}$ induce some maximum capacities on the edges in $T$. These represent
how much capacity we need to install on each ``shortest path $\eta(w)\eta(v)$'' to route $\mathcal{D}$. These in turn
drive the dynamic program for placing the hubs.
 Note that we have chosen $u_{\mathcal{D},T}$ to be  the smallest capacity vector $u$ such that
$\mathcal{D} \subseteq \mathcal{H}_{T,u}$. In other words, designing a network to support the $T$-tope $\mathcal{H}_{T,u_{\mathcal{D},T}}$ also supports $\mathcal{D}$.

Under an arbitrary mapping $\eta:V(T) \rightarrow V(G)$, we must reserve $u_{\mathcal{D},T}(e=vw)$ units of capacity for the  circuit
on the shortest path between $\eta(v),\eta(w)$ in $G$. Hence the triple $T,\mathcal{D},\eta$ induces a total capacity cost of
\begin{equation}
\label{eqn:obj}
c_\hh(T) = \sum_{e=uv \in T}  u_{\mathcal{D},T}(e) \dist_G(\eta(v)\eta(w)).
\end{equation}

We refer to the {\em hub placement} problem as determining the map
$\eta$ which minimizes (\ref{eqn:obj}).
In \cite{olver10approx} a dynamic programming algorithm is given to find the optimal
mapping efficiently.  This forms the basis of their
$\O(1)$-approximation for the {\em Generalized VPN Problem}. Namely, the general $\rnd$ problem
for $T$-topes.

\subsection{$\rndhh$ and Other Algorithmic Problems}
\label{sec:cases}
$\rndhh$ contains well-studied algorithmic problems as special cases.
First, if $\mathcal{D}=\mathcal{H}_U$ is the class of hose matrices, then, as mentioned earlier, an optimal solution to $\rnd$ is
 induced by a hub routing template, which is in turn induced by a star hub tree. Hence in this case
$\rndhh = \rndhub = \rnd$.

Second, consider the case where $\mathcal{D}$ consists of a single demand matrix $D$.
Obviously,  the optimal
 solution is induced by a $\sp$ template. $\rndhh$ however, asks for
an optimal $\hh$ templates. This  is a
 classical problem in combinatorial optimization called the {\em minimum communication cost tree} (MCT).
 Find some tree (not necessarily in $G$), containing $V(G)$,
such that if we route the demands $D$ on $T$, the overall routing cost is minimized.\footnote{This
is directly related in turn to the average stretch tree problem \cite{peleg1998deterministic}.}
Being allowed to use a hub tree for the routing, as opposed to a subtree of $G$, essentially
means that we are restricting to MCT with metric costs\footnote{Costs are \emph{metric} if they are induced by a cost function that is a metric.}.
For the special case
 where  $G$ is itself a complete graph with unit-cost edges,  Hu \cite{hu1974optimum} showed that an optimal solution is induced by a so-called Gomory-Hu tree on the complete graph with edge capacities
 $D_{ij}$.\footnote{In general, the Gomory-Hu tree for a capacitated graph need not be subtree of the graph; hence
 the requirement that $G$ is complete.}

 As we now see, $\rndhh$ may incur a logarithmic increase over $\rnd$, and in fact $\rndsp$.
 On the positive side, it is known that in general there is a tree in the metric completion of
  $G$ whose ``distortion''  is at most $\O(\log n)$  \cite{fakcharoenphol2003tight}. The approach of Gupta
   thus yields an $\O(\log n)$ approximation for $\rndhh$.  One can see this by noting that $\rndhh$ costs no
   more than a best tree routing on $T$ in the metric completion. To see this, define
    a hub tree $T'$ by hanging a leaf edge $vv'$ from each node of $T$.  Then consider the mapping where
    each $v,v'$ maps to $v \in G$.
    This establishes:
\begin{fact}
\label{fact:hhvstr}
For any instance of $\rnd$,
$$\rndhh \leq \rndtr.$$
\end{fact}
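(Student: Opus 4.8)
The plan is to exhibit, for any tree routing, a hierarchical hub routing of no greater cost, following the construction indicated just before the statement. Fix an optimal $\tr$ solution and let $T$ be its underlying tree. Recall that routing $\mathcal{D}$ on $T$ reserves capacity $u_{\mathcal{D},T}(e)$ on each tree edge $e$ (each edge carries exactly the demand crossing its fundamental cut), so that $\rndtr = \sum_{e \in E(T)} c(e)\, u_{\mathcal{D},T}(e)$. I would then build a hub tree $T'$ by attaching a fresh pendant leaf $v'$ to every node $v$ of $T$, declaring the $v'$ to be the terminals of $T'$ so that every terminal is a leaf as required of a hub tree, and I would define the hub map by $\eta(v') = \eta(v) = v$. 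This is a legal hub map, since each leaf $v'$ is sent to its own terminal $v$, meeting the constraint that leaves map to themselves.

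The first step is to check that the capacities that $\mathcal{D}$ induces on $T'$ via (\ref{eqn:caps}) agree with those on $T$. Deleting an original edge $uv \in E(T)$ from $T'$ induces the same bipartition of the terminals as deleting $uv$ from $T$, since each pendant $v'$ remains on the side of its node $v$; hence $\mathsf{fund}_{T'}(uv) = \mathsf{fund}_T(uv)$ and therefore $u_{\mathcal{D},T'}(uv) = u_{\mathcal{D},T}(uv)$. Deleting a pendant edge $vv'$ isolates the single terminal $v$, so it may carry positive capacity, but under $\eta$ both of its endpoints map to the same physical node $v$.

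Now I would compare costs term by term using (\ref{eqn:obj}). Each pendant edge $vv'$ contributes $u_{\mathcal{D},T'}(vv')\,\dist_G(\eta(v'),\eta(v)) = u_{\mathcal{D},T'}(vv')\cdot\dist_G(v,v) = 0$, so pendant edges are free. Each original edge $uv$ contributes $u_{\mathcal{D},T}(uv)\,\dist_G(u,v) \le u_{\mathcal{D},T}(uv)\, c(uv)$, because $uv \in E(G)$ forces $\dist_G(u,v)\le c(uv)$. Summing over $E(T')$ gives $c_\hh(T',\eta) \le \sum_{uv\in E(T)} c(uv)\, u_{\mathcal{D},T}(uv) = \rndtr$. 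Since $(T',\eta)$ is a feasible $\hh$ template whose routing cost is at most $c_\hh(T',\eta)$, we conclude $\rndhh \le \rndtr$. (If one instead allows $T$ to be a tree in the metric completion, the original edges contribute exactly $\dist_G(u,v)$ and the bound becomes an equality for those edges.)

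The only real obstacle is the bookkeeping in the capacity step: one must verify that the fundamental-cut computation on the abstract tree $T'$ coincides with that on $T$, and in particular that the pendant leaves — which are precisely what force all terminals to be leaves of the hub tree — contribute zero cost exactly because both endpoints of each pendant edge are mapped to the same physical node. Everything else is a direct edge-by-edge comparison using $\dist_G(u,v)\le c(uv)$.
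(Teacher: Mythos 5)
Your proof is correct and takes essentially the same route the paper sketches just before the statement: hang a pendant leaf edge $vv'$ off each node $v$ of the routing tree $T$ and use the hub map sending both $v$ and $v'$ to $v \in G$. You merely make explicit the bookkeeping the paper leaves implicit --- that fundamental cuts (hence the capacities $u_{\mathcal{D},T}$) are preserved, that pendant edges cost zero because $\dist_G(v,v)=0$, and that $\dist_G(u,v)\le c(uv)$ for each tree edge $uv \in E(G)$ --- so the argument matches the intended one.
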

    \noindent
    While these techniques  yield an $\O(\log n)$-approximation for  $\rndhh$
 and hence also MCT, it cannot be ruled out that there
are polytime $\O(1)$-approximations. For the purposes of our study, we use a heuristic algorithm instead of metric embeddings. This is outlined in Section \ref{sec:heuristic}.



\subsection{Basic Results for $\rndhh$}
\label{sec:gaps}

In this section we show that $\rndsp$ and $\rndhh$ are incomparable;
depending on the instance, one may be significantly cheaper than the
other. This motivates our main goal of classifying the demand spaces
according to which
template is better.

We now establish a family of $\rnd$ instances where $\rndhh$ does much worse than $\rnd$, and in fact worse than $\rndsp$.
\begin{theorem}
\label{thm:hhvssp}
There is a sequence of (unit-cost) graphs $\{G_n\}_{n\in\N}$ on $n$ nodes and demand polytopes $\{\mathcal{D}_n\}_{n\in\N}$ such that
for these instances
$$\rndhh (G_n) = \Omega(\log n \, \rndsp (G_n)).$$
\end{theorem}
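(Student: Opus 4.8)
The plan is to reduce the theorem to the classical impossibility of embedding an expander into a tree metric with small average distortion. I would take $G_n$ to be a constant-degree expander on $n$ vertices, say $d$-regular with $d=O(1)$ and edge expansion (Cheeger constant) bounded below by an absolute constant, and let $\mathcal{D}_n=\{D\}$ be the single demand matrix with $D_{ij}=1$ when $ij\in E(G_n)$ and $D_{ij}=0$ otherwise. For a single demand matrix both quantities take clean closed forms. The $\sp$ side is immediate: $\rndsp(G_n)=\sum_{ij}D_{ij}\dist_{G_n}(i,j)=|E(G_n)|=\Theta(n)$, since every demand pair is a graph edge and routes at distance one.

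The second step is to rewrite $\rndhh$ in metric-embedding terms, exploiting the observation of Section~\ref{sec:cases} that with a single demand matrix $\rndhh$ is exactly the minimum communication cost tree problem with metric costs. Any hub tree $T$ together with a hub map $\eta$ induces a metric $\delta$ on the terminals: give each hub-tree edge $vw$ the length $\dist_{G_n}(\eta(v),\eta(w))$ and take tree-path distances. Repeated application of the triangle inequality shows $\delta(i,j)\ge \dist_{G_n}(i,j)$, so $\delta$ dominates the graph metric. Hence $\rndhh(G_n)\ge \min_\delta \sum_{ij\in E(G_n)}\delta(i,j)$, the minimum ranging over all tree metrics $\delta$ dominating $\dist_{G_n}$. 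Since every tree metric embeds isometrically into $\ell_1$, it suffices to prove this lower bound over all dominating $\ell_1$ metrics.

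The crux is a Poincar\'e-type inequality coming from expansion. Writing any $\ell_1$ metric as a nonnegative combination of cut metrics $\delta_S$, I would use the identities $\sum_{ij\in E}\delta_S(i,j)=|E(S,\bar S)|$ and $\sum_{i<j}\delta_S(i,j)=|S||\bar S|$, together with the isoperimetric bound $|E(S,\bar S)|\ge \epsilon\,|S||\bar S|/n$ valid for a constant $\epsilon>0$ because $G_n$ is an expander, to conclude that every $\ell_1$ metric $\delta$ satisfies $\sum_{ij\in E}\delta(i,j)\ge (\epsilon/n)\sum_{i<j}\delta(i,j)$. Feeding in a dominating $\delta$ and the standard expander estimate $\sum_{i<j}\dist_{G_n}(i,j)=\Omega(n^2\log n)$ (at most $d^{r+1}$ vertices lie within radius $r$, so a constant fraction of pairs are $\Omega(\log n)$ apart) yields $\sum_{ij\in E}\delta(i,j)\ge (\epsilon/n)\,\Omega(n^2\log n)=\Omega(n\log n)$. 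Combining with $\rndsp(G_n)=\Theta(n)$ gives $\rndhh(G_n)=\Omega(n\log n)=\Omega(\log n\cdot\rndsp(G_n))$, as required.

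I expect the main obstacle to be the embedding step rather than the arithmetic. One must verify carefully that \emph{every} hub tree and hub map really induces a dominating tree metric, so that no clever choice of template escapes the bound, and then invoke $\ell_1$-embeddability of tree metrics and the cut-metric decomposition so that expansion applies uniformly to all candidate templates simultaneously. The average-distance estimate for the expander and the isoperimetric constant are routine by comparison.
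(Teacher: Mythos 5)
Your proof is correct, and it reaches the same $\Omega(n\log n)$ bound by a genuinely different route than the paper, starting from the identical instance: a constant-degree expander with the single demand matrix placing unit demand on each graph edge, so that $\rndsp(G_n)=|E(G_n)|=\Theta(n)$. The paper argues combinatorially and directly on the hub tree: it picks a center $c$ of $T$, uses $d$-regularity to show only a small constant fraction of vertices lie within distance roughly $\log_d n$ of $\eta(c)$, and uses expansion to exhibit $\Omega(n)$ demand edges that are separated by $c$ in $T$ and have a far endpoint; each such demand's route must pass through $\eta(c)$ and hence costs $\Omega(\log n)$. You instead observe that every pair $(T,\eta)$ induces a tree metric $\delta$ dominating $\dist_{G_n}$ whose demand-weighted sum equals the design cost, and then rule out \emph{all} dominating tree metrics at once via $\ell_1$-embeddability of tree metrics, the cut-cone decomposition, the expander Poincar\'e inequality $\sum_{ij\in E}\delta(i,j)\ge(\epsilon/n)\sum_{i<j}\delta(i,j)$, and the average-distance bound $\sum_{i<j}\dist_{G_n}(i,j)=\Omega(n^2\log n)$. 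All of these steps are sound; this is in essence the classical Linial--London--Rabinovich/Rabinovich--Raz obstruction to low-distortion embeddings of expanders. Your route buys generality: it lower-bounds any scheme whose cost is governed by a dominating $\ell_1$ metric, so it simultaneously handles $\rndtr$ in the metric completion (consistent with Fact~\ref{fact:hhvstr}) and makes the connection to distortion explicit, whereas the paper's argument is elementary, self-contained, and needs no embedding machinery.

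One caveat worth a remark. Your opening inequality $\rndhh\ge\min_\delta\sum_{ij\in E}\delta(i,j)$ rests on the identity that the cost of $(T,\eta)$ equals $\sum_{ij}D_{ij}\,\delta(i,j)$. This is exact for the circuit-reservation cost of Equations~(\ref{eqn:caps})--(\ref{eqn:obj}), which is the paper's operative formulation of $\rndhh$ (it is what the hub-placement dynamic program optimizes). Under the raw template cost of Section~\ref{sec:rndoverview}, however, a demand's path is the possibly non-simple concatenation of shortest segments and is charged only for its edge \emph{support}, which can be strictly cheaper than $\delta(i,j)$ when consecutive segments overlap; your inequality is then no longer immediate. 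The theorem still holds in that reading --- the paper's per-demand argument survives because the support of each bad route still contains a path of length $\Omega(\log n)$ to $\eta(c)$ --- but your proof as written would need a short additional argument to cover shared edges.
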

\begin{proof}
We describe $G=G_n$ and $P=P_n$ for a fixed $n$.
The network graph $G$ consists of a $d$-regular, $\alpha$-expander graph on $n$ vertices, where $d=\mathcal{O}(1)$ and $\alpha=\Omega(1)$ (see \cite{hoory06expandergraphs} for existence proof). Directly we have that $|E(G)|=\O(n)$. Moreover,  all edges have unit cost.

The demand polytope $\mathcal{D}$ consists of a single matrix: one unit of demand between the endpoints of any   edge in $G$. The $\rndsp$ solution then consists of the network itself (route each demand on its own edge).
Its cost is $\O(n)$, the number of edges in the graph. Let $T$ be any hub tree. We show that the cost of any  $\hh$ solution induced by $T$ is $\Omega(n\log n)$. Hence, $\rndhh$  is $\Omega(n\log n)$.

Let $c\in V(T)$ be a \emph{center} of the tree, that is each component of $T\setminus\{c\}$ has at most $\frac{n}{2}$ leaves (it is well-known, and easy to see that every tree has a center). We call an edge $e\in E(G)$ \emph{separated by $c$ in $T$} if its endpoints lie in different components of $T\setminus \{c\}$. We let $E_c\subseteq E(G)$ be the set of all edges separated by $c$ in $T$. Let $C_i$ be one of the components in $T\setminus \{c\}$ and let $\ell(C_i)$ be the leaves of $C_i$. We know that $|\ell(C_i)|\leq \frac{n}{2}$. Hence, since $G$ is an $\alpha$-expander, there is at least $\alpha |\ell(C_i)|$ edges going out of $C_i$.  So we have that $|E_c|$ (or the total number of edges between the $C_i$'s) is at least
$$|E_c|\geq\frac{1}{2}\sum_{i} \alpha |\ell(C_i)| = \frac{\alpha}{2}\sum_{i} |\ell(C_i)| = \frac{\alpha n }{2}$$
as $\sum_{i} |\ell(C_i)| = n$ (the leaves of $T$ consist of the nodes of $G$).

Let $\eta:V(T)\to V(G)$ be the optimal hub map for $T$. Call a node $v\in V(G)$ \emph{close to $c$} if
$$\dist_G(\eta(v)=v,\eta(c))<\log_{d}(\min\{\frac{1}{2},\frac{\alpha}{2d}\} n).$$
Let $V_c$ be the set of close vertices in $G$. Since $G$ is a $d$-regular graph, its size is bounded by
$$|V_c|\leq d^{\log_{d}(\min\{\frac{1}{2},\frac{\alpha}{2d}\} n)} = \min\{\frac{1}{2},\frac{\alpha}{2d}\} n .$$
We now merge the two concepts and say that an edge $e=xy\in E_c$ that is separated by $c$ in $T$ is \emph{good} if both $x,y\in V_c$, i.e. $x,y$ are close to $c$. Conversely, an edge $e=xy\in E_c$ is \emph{bad} if one of its endpoints is not close to $c$ in $T$. Let $B$ be the set of bad edges. By the $d$-regularity of $G$, a bound on the number of bad edges is given by
$$|B|\geq |E_c|-\frac{d|V_c|}{2}\geq \frac{\alpha n}{2} - \frac{d\alpha n}{4d} \geq \frac{\alpha}{4} n.$$
Since the bad edges are unit demands that must be simultaneously routable in $T$, the ``non-close" endpoint of these demand edges must route through the image $\eta(c)$ of $c$, imposing a cost of at least $\log_d (n)$. We have $\Omega(n)$ bad demands, each incurring a cost of $\Omega(\log n)$, and thus the cost of the optimal \textsc{Hierarchical Hubbing} induced by $T$ routing is $\Omega(n\log n)$.
\end{proof}
The next result shows that $\rndtr$ may  be much cheaper than $\rndsp$. Hence, we can get an unbounded gap between $\rndsp$ and $\rndtr$. By Fact 1, this implies that $\rndhh$ may also be much cheaper than $\rndsp$.

\begin{theorem}
\label{thm:spvstrhh}
There is a sequence of weighted graphs $\{G_n\}_{n\in\N}$ and demand polytopes $\{\mathcal{D}_n\}_{n\in\N}$ such that
$$\rndsp (G_n) = \Omega(n^2)$$
and
$$\rndtr (G_n) = \O(1).$$
\end{theorem}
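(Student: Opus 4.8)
The plan is to realize the intuition that shortest-path routing is wasteful precisely when it is forced to reserve capacity separately for \emph{every} source--destination pair, whereas a tree can amortize all of those reservations against a single global budget. I would take $G_n$ to be the complete graph $K_n$ on vertices $\{1,\dots,n\}$ equipped with a \emph{strictly concave line metric}: set $c_{ij}=g(|i-j|)$ where $g(x)=x-x^2/n^2$ (all weights may be rescaled by $1/n$ at the very end). Since $g$ is strictly concave with $g(0)=0$ it is strictly subadditive, so $c$ obeys the strict triangle inequality. For the demand universe I would take the simplest ``budget'' polytope
\[
\D_n=\Big\{\,D\ge 0: D=D^{\top},\ D_{ii}=0,\ \textstyle\sum_{i<j}D_{ij}\le 1\,\Big\},
\]
all symmetric, zero-diagonal demand matrices of total value at most one. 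The construction hinges on a contrast intrinsic to $\D_n$: any one pair $(i,j)$ may carry a full unit, yet \emph{every} cut is crossed by at most one unit in total.

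The first step is to pin down the shortest-path template. Because $g$ is strictly concave and vanishes at $0$, subdividing the straight segment from $i$ to $j$ strictly increases total length, and any non-monotone detour is longer still; hence the \emph{unique} shortest $i$--$j$ path in $G_n$ is the single edge $ij$. This is the crucial point, which I would prove carefully, since it closes $\rndsp$'s only escape: there is exactly one shortest-path template, $P_{ij}=\{ij\}$, and it can share nothing across hops. Thus the physical edge $ij$ is used by the pair $(i,j)$ alone, its reserved capacity is $\max_{D\in\D_n}D_{ij}=1$, and
\[
\rndsp(G_n)=\sum_{i<j}c_{ij}=\sum_{i<j}|i-j|-\tfrac{1}{n^2}\sum_{i<j}|i-j|^2=\Theta(n^3),
\]
which becomes $\Omega(n^2)$ after the $1/n$ rescaling.

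For the upper bound on $\rndtr$ I would exhibit one cheap tree: the Hamiltonian path $T$ that visits $1,2,\dots,n$ in order, a subtree of $K_n$. The fundamental cut of each tree edge $(k,k+1)$ separates $\{1,\dots,k\}$ from the rest, so by the global budget its induced capacity (\ref{eqn:caps}) is $\max_{D\in\D_n}\sum_{i\le k<j}D_{ij}\le 1$. Hence $\rndtr(G_n)\le\sum_{k=1}^{n-1}c_{k,k+1}=(n-1)\,g(1)=\Theta(n)$, which is $\O(1)$ after rescaling. Combining the two estimates gives $\rndsp(G_n)=\Omega(n^2)$ and $\rndtr(G_n)=\O(1)$; by Fact~\ref{fact:hhvstr} the same tree also certifies $\rndhh(G_n)=\O(1)$, delivering the claimed (indeed unbounded) separation.

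The step I expect to be the main obstacle is guaranteeing that shortest paths are \emph{uniquely} the direct edges, since this is exactly what prevents $\rndsp$ from recovering the tree's savings by routing pairs along shared multi-hop paths; a plain (non-strict) line metric would admit such sharing and collapse the gap to a constant. The delicate balancing act is to keep $g$ close enough to linear that $\sum_{i<j}c_{ij}=\Theta(n^3)$ while each path edge still costs $\Theta(1)$, so the tree pays only $\Theta(n)$; the choice $g(x)=x-x^2/n^2$ is tuned so the concave correction is lower order in the numerator yet still forces strict subadditivity. Finally, it is worth recording \emph{why} a hose polytope cannot be substituted here: with marginals the capacity of a balanced cut grows linearly in its size, driving the path tree up to $\Theta(n^2)$ and capping the gap at $\Theta(n)$, so the global single-unit budget -- which holds every cut at capacity one -- is exactly what buys the extra factor of $n$.
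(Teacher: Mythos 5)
Your proof is correct, and structurally it rests on the same two pillars as the paper's own construction: a total-budget-one demand polytope (so every fundamental cut of any tree carries at most one unit, making a spanning tree cheap) combined with edge weights engineered so that the \emph{unique} shortest path between each demand pair is its direct edge (so $\rndsp$ must reserve a full unit on $\Theta(n^2)$ separate edges of cost $\Theta(1)$ each, with no sharing). The instantiation, though, is genuinely different. The paper uses a hand-tuned double star: hubs $a,b$ joined by a bridge of weight $1$, star edges of weight $\frac{1}{2n}$, direct edges $v_iv_j'$ of weight $1-\frac{1}{n}$, and demand polytope $\conv(D^{i,j})$ over the bipartite pairs; the direct edge beats the three-hop detour through the hubs ($1-\frac{1}{n} < 1+\frac{1}{n}$), giving $\rndsp = n^2(1-\frac{1}{n})$, while the two stars plus the bridge form a tree of total cost $2$ with unit capacities -- the verification is immediate and needs no rescaling. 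You instead take $K_n$ with a strictly concave line metric and a Hamiltonian-path tree; your key lemma -- strict concavity with $g(0)=0$ gives strict subadditivity, which together with monotonicity of $g$ on $[0,n-1]$ makes every multi-edge walk strictly longer than the direct edge -- is a cleaner and more general mechanism for forcing uniqueness of the shortest-path template (it works for any strictly concave gauge), at the price of the $\Theta(n^3)$ summation, the tuning of the $x^2/n^2$ correction to stay lower order, and the final $1/n$ rescaling needed to land on the stated $\Omega(n^2)$ versus $\O(1)$ bounds. (One cosmetic point: the paper's capacity formula sums over ordered pairs, so your per-edge reservations are $2$ rather than $1$ for symmetric matrices -- a harmless constant.) Both arguments are complete, and your closing observation that a hose-type polytope could not replace the budget polytope, since marginals make balanced cuts expensive and cap the gap, accurately reflects why both constructions use a single-unit budget.
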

\begin{proof}
The network $G=(V,E)$ (actually $G_{2n+2}$)  consists of two stars centered at nodes $a,b$ connected by an edge.
The leaves are $v_i,v_i'$  with the following edges:
\begin{align*}
E = \{ av_i : 1 \leq i \leq n\} \cup \\
 \{ bv'_i : 1 \leq i \leq n\} \cup \\
  \{v_iv'_j : 1\leq i,j \leq n\} \cup \\
  \{ab\}.
\end{align*}

The edges are weighted as follows: the bridge edge $ab$  has weight one, the star edges $av_i$ or $bv'_i$ for $1\leq i \leq n$ have weight $\frac{1}{2n}$ and the edges $v_iv'_j$ for $1\leq i,j \leq n$ have weight $1-\frac{1}{n}$.

The (symmetric) demand polytope $\mathcal{D}$ consists of convex combinations of unit demands between $v_i$ and $v'_j$.
 That is, for each $i,j$ let $D^{i,j}$ be the matrix with
$D_{v_iv_j'}=D_{v_j'v_i}=1$ and all other entries $0$.
Then $\mathcal{D}=\conv(D^{i,j})$.

The optimal shortest path template in the network consists of routing the demand between $v_i$ and $v'_j$
 through the edge of cost $1-\frac{1}{n}$ connecting $v_i$ and $v'_j$. This template induces a cost of $\rndsp = n^2(1-\frac{1}{n})=n^2 - n = \Omega(n^2)$.

Now consider the tree template induces by the union of the two stars.
In order to support $\mathcal{D}$, the sub-optimization problem only  reserves one unit of capacity on the bridge edge $ab$, as well as one unit on all star edges. Hence, this capacity cost is $\O(1)$, and is an upper bound
for the best tree solution.
\end{proof}

\section{Evaluation Studies}
We compare the network design costs for capped hose models using the routing strategies $\sp$ and $\hh$. We solve the associated optimization problems ($\rndsp$, $\rndhh$) across many instances of the capped hose traffic model. For the purposes of this study, we implemented an exact method for $\sp$, and employed our BTSM heuristic in the case of $\hh$ routing.

For the evaluation set-up, we use two carrier network topologies: the American backbone network Abilene (11 nodes, 14 edges), and the Australian telecom network Telstra \cite{spring2002rocketfuel} (104 nodes, 151 edges).
We assume that per-unit link capacity costs are proportional to physical distances so we use this as our cost vector $c$ for determining the best $\rnd$ solution.
Our traffic data is based both on randomly generated traffic instances within the capped hose model constraints, as well as on a traffic scenario based on real data.

For the simulated traffic scenarios, for each network we randomly generate many traffic instances (each corresponding to a capped hose polytope) on which to solve $\rnd$. That is, each instance arises from some collection of peak demands and marginals. Our random instances are generated starting from actual population statistics (see Section~\ref{sec:generate}). In addition on Abilene, we use real data based on previous work from \cite{zhang2003tomo} which gives a time series of point-to-point demand measurements that can be used to create capped hose models (described later).

Our results show that $\hh$ templates are often the most cost-effective. That is, $\rndhh$ is very often less than both $\rndsp$, and $\rndhub$. In particular, this means that by adding peak capacities $U(i,j)$ into the hose model, the single-hub tree-routing template is no longer cost-effective. We  need additional hubs for a cost-effective network design. This is strong evidence for the use of $\hh$ routing.

\subsection{Generating Traffic}	
\label{sec:generate}

To generate our traffic demand matrices, namely marginals $U(i)$ and peak demands $U(i,j)$, we try to mimic how traffic is naturally generated in today’s core networks. It has been widely accepted that core traffic is stochastic in nature (``bursty'') and that peak demand  is much larger than average demand. This change in traffic patterns presents a moving target for service providers. The uncertainty is due to novel content-based network applications combined with factors such as data center consolidations and content mobility (content migrating from location to location based on where they are consumed).
Therefore in our evaluation scenario, we first assume peak demands, and then impose marginals based on the equipment choice at each node (capacity and number of ports at the ingress and egress nodes).
Our process is anchored by first fixing the peak demand between every pair of node locations based on their population \cite{chang2006many}. This gravity model takes into account the population size of the two sites as well as their distance. Since larger sites attract more traffic  and  closer proximity can  lead to  greater attraction, the gravity model incorporates these two features. The relative traffic between two places is determined by multiplying the population of one city by the population of the other city, and then dividing the product by the distance between the two cities squared. In our case, the peak demand between two cities $i,j$ is
\begin{equation}
\label{eq:peakdemands}
U(i,j) = \alpha(i,j)\,\mathsf{Population}(i)\,\mathsf{Population}(j)
\end{equation}
where $\alpha(i,j) = 1/\mathrm{distance}(i,j)$ and $\mathrm{distance}(i,j)$ is the geographical distance between node i and node j.

As datacenter based traffic dominates over the Internet, significant research activity is underway to learn the characteristics of this traffic. However, as noted in \cite{adhikari2010dynamics} we see that the end-to-end traffic matrix showing how the total traffic to YouTube video-servers are divided among different data centers follows the gravity model. On the other hand, the entry-exit traffic matrix seen by the ISP suggests that data-center traffic flows are not getting divided to different locations. This is due to the impact of BGP routing policies employed by the ISP to handle datacenter-bound traffic.

\noindent
\underline{\sc Generating marginals}

To generate marginals, we propose  sampling within the individual
 relevance intervals, as discussed in Section \ref{sec:cappedhose}. We start  by discretizing $U(i)$'s relevance interval into $s>0$ steps, for each of the $n$ nodes $i$. Then, for each value $\sigma_i \in \{0,1,2 \dots ,s\}$ we get a $U(i)$ within its relevance intervals. Formally:
$$
U(i) = \max_j U(i,j) + \frac{\sigma_i}{s} (\sum_j U(i,j) - \max_j U(i,j)).
$$

Thus each vector $\sigma \in\{0,1,\ldots,s\}^n$ yields an instantiation of the marginals.
For fixed $s$ and $n$, this gives  $(s+1)^n$ possible data points, yielding all possible marginals within their relevance interval up to a precision based on the coarseness of the discretization (size of $s$).

It is impractical to generate all $(s+1)^n$ points for large $n$ and $s$. To bypass this, we use a smaller value $k$ in place of $n$ to generate $(s+1)^k$ points. We then  sample uniformly from these $\sigma$ to assign the values for each node, i.e. we mimic the distribution of the short vector $\sigma$ of length $k$ to create our longer marginal vector of length $n$. So we get
\begin{equation}
\label{eq:marginal}
U(i) = \max_j U(i,j) + \frac{\sigma_{\mathcal{U}[k]}}{s} (\sum_j U(i,j) - \max_j U(i,j)).
\end{equation}
where $\sigma \in \{0,1,\ldots,s\}^k$ and $\mathcal{U}[k]$ is chosen uniformly at random from $\{1,2,\ldots,k\}$.

\subsection{Time-Series Traffic Matrix Data}
\label{sec:real}
For the Abilene network, we use the estimation of point-to-point demands from the work of \cite{zhang2003tomo} to create a realistic data point. The data consists of 24 weeks worth of traffic matrices sampled each five minutes. In other words, we get a sequence $\{D^{t}_{i,j}\}_{t=1}^{T}$ ($T=48384$) of demands from $i$ to $j$ for all nodes $i,j$ in the network. We transform this data into marginals and peak demands for a capped hose model as follows. The peak demand between two nodes $i,j$ is by definition $U(i,j) = \max_t D^{t}_{i,j}.$ Similarly, node $i$'s marginal is the (minimum) capacity required to route the maximum traffic out of that node at a given time step: $U(i) = \max_t \sum_j D^t_{i,j}.$ We consider different sampling intervals, i.e. various $T$'s.

\subsection{Computation of Link Costs}
It is simple to compute the capacity $u(e)$ needed on each edge $e$ for $\hh$ - we just add up the capacities from every hub tree edge $f=ij$ for which the $\eta(i)\eta(j)$-path passes through $e$. For $\sp$, we follow a similar sub-optimization as used for $\hh$ templates. Namely, we find the maximum value of $u(e)=\sum_{ij}w_{ij}(e) D_{ij}$ over all demand matrices $D$ in our capped hose universe $\mathcal{D}$ where $w_{ij}(e)$ is $1$ if $P_{ij}$ uses edge $e$, and is $0$ otherwise.
For any edge $e=st$, it is straightforward to see that the set of $ij$'s with $w_{ij}(e)=1$ induces a bipartite graph $H=(V(G),\{ij:w_{ij}=1\})$ when the $P_{ij}$'s are shortest paths. Suppose $H$ is not bipartite, then there is an odd cycle $C=i_1i_2\ldots i_{2k+1}$. Thus, $w_{i_{2k+1}i_1}=1$ and $w_{i_ji_{j+1}}=1$ for all $1\leq j \leq 2k+1$. Without loss of generality, the fact that $P_{ij}$'s are shortest paths implies that $i_j$ is closer to $s$ for odd $j$'s and closer $t$ for even $j$'s (see Figure 3). But then this means the path $P'_{i_{2k+1}i_1} = \sp_G(i_{2k+1},s) \cup \sp_G(s,i_1)$ is strictly shorter than $P_{i_{2k+1},i_1}$ as it doesn't use $e$; contradiction.
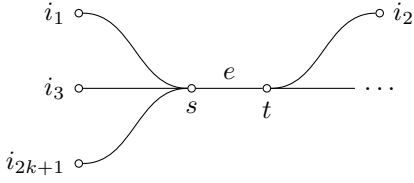
\begin{figure}
\centering
\begin{tikzpicture}
\node[inner sep=1pt,draw,circle,label=left:$i_1$](i1) at (0,0){};
\node[inner sep=1pt,draw,circle,label=right:$i_2$](i2) at (4,0){};
\node[inner sep=1pt,draw,circle,label=left:$i_3$](i3) at (0,-1){};
\node[](i4) at (4,-1){$\ldots$};
\node[inner sep=1pt,draw,circle,label=left:$i_{2k+1}$](i5) at (0,-2){};

\node[inner sep=1pt,draw,circle,label=below:$s$](s) at (1.5,-1){};
\node[inner sep=1pt,draw,circle,label=below:$t$](t) at (2.5,-1){};
\draw (s)--(t) node[pos=0.5,above]{$e$};

\draw (i1) to[out=0,in=180] (s);
\draw (i3) to[out=0,in=180] (s);
\draw (i5) to[out=0,in=180] (s);
\draw (i2) to[out=180,in=0] (t);
\draw (i4) to[out=180,in=0] (t);
\end{tikzpicture}
\caption{Going from $i_j$ to $i_{j+1}$, always through $e=st$ and along shortest paths.}
\label{fig:bipartiteargument}
\end{figure}
 Hence we may use a max flow routing algorithm to compute $u(e)$.
The computation of link capacities can thus be done with a collection of all shortest paths $\{P_{ij}\}$ generated using standard algorithms.

Algorithm 2 illustrates our experimental procedure.
\begin{algorithm}[!h]
\label{alg:exp}
\caption{Experimental procedure.}
\begin{algorithmic}
\REQUIRE Network graph $G=(V,E)$ (where edge costs correspond to the geographical distance between edge's endpoints) and parameters $s,k\in \N$.
\STATE
\FOR{\textbf{each} pair of node cities $i,j\in V$}
\STATE Fix point-to-point peak demand between $i$ and $j$ as per Equation \eqref{eq:peakdemands}.
\ENDFOR
\FOR{\textbf{each} $\sigma\in (s+1)^k$}
\FOR{\textbf{each} $i\in V$}
\STATE Fix node $i$'s marginal as per Equation \eqref{eq:marginal}.
\ENDFOR
\STATE This gives a capped hose model $c\mathcal{H}$ arising from marginals $U(i)$'s and peak demands $U(i,j)$'s.
\STATE Find marginal strength vector $\mu$ of $c\mathcal{H}$.
\STATE Find peak demand strength vector $\pi$ of $c\mathcal{H}$.
\STATE Solve $\rnd_\sp(G,c\mathcal{H})$ optimally.
\STATE Use our BTSM heuristic and Olver et al's algorithm \cite{olver10approx} to approximately solve $\rnd_\hh(G,c\mathcal{H})$ and get a solution of cost $\rnd'_\hh(G,c\mathcal{H})$.
\STATE Report $\|\mu\|_2$, $\|\pi\|_2$ and $\frac{\rnd_\sp(G,c\mathcal{H})}{\rnd'_\hh(G,c\mathcal{H})}$.
\ENDFOR
\end{algorithmic}
\end{algorithm}

\subsection{Results}
In Figures 2 and 3, we plot the results of the low-cost routing templates for two different network topologies Abilene and Telstra\footnote{A much smaller number of points were sampled to cover most of the spectrum due to the computational constraints caused by the size of the Telstra network.}. Traffic matrix instances are plotted according to their marginal and peak demand strengths and coloured based on the ratio between the edge cost of the (optimal) $\sp$ routing template and the $\hh$ routing template. A blue point indicates that $\sp$ is better and a red point indicates that $\hh$ is better. Moreover, the intensity of the colouring is proportional to the proximity of the ratio to one. If the costs of $\sp$ and $\hh$ are very similar, the color is almost white, otherwise it is highly saturated

Intuitively, and as discussed in Section \ref{sec:strength}, the $x$-axis corresponds to the affinity of the traffic to the hose model. A large $\mu$ value means more constraining marginals. Thus the optimal routing should primarily consider the marginals only. So we expect hub routing to do much better. Similarly, the $y$-axis corresponds to the affinity of the traffic to the peak-demand model. Again, a large $\pi$ value means more constraining peak demands, and thus we expect $\sp$ to do much better. Furthermore, there is a duality between the $\mu$ and $\pi$ values: a demand universe can only have strong marginals or strong peak demands, not both. This explains the quarter circle ``swoosh'' shape of the plot.

\begin{figure}
\centering
\includegraphics[scale=0.45]{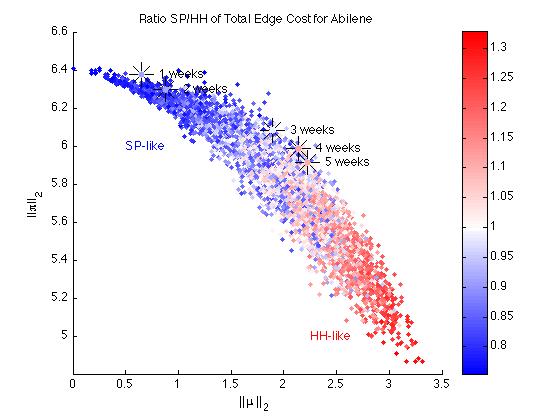}
\caption{The ratio between the edge cost of the (optimal) $\sp$ routing and the $\hh$ routing found with our heuristic plotted against the norms of the peak demand and marginal strength vectors for varying traffic, and the Abilene network. Data points from the time-series of traffic matrices added, with duration considered as label. \emph{Better viewed in color.}}
\end{figure}
\begin{figure}
\centering
\label{fig:TELSTRAplot}
\includegraphics[scale=0.45]{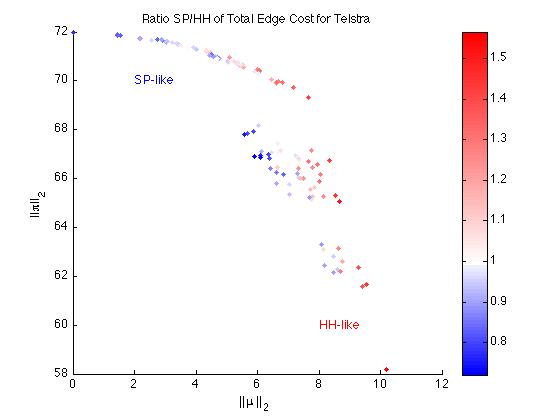}
\caption{The ratio between the edge cost of the (optimal) $\sp$ routing and the $\hh$ routing found with our heuristic plotted against the norms of the peak demand and marginal strength vectors for varying traffic, and the Telstra network. \emph{Better viewed in color.}
}
\end{figure}

\section{Discussion}
In \cite{shepherd2006selective}, hub routing was used to compare the cost of two architectures: one based on IP routing
and one employing circuits. Our work
 focuses on how the link and/or node costs vary with  the choice of routing template.
We now address some of the issues and our findings.

\subsection{$\mu-\pi$ as indicators for the choice of a routing template}
We see that there is a continuum of routing templates from $\hh$ to $\sp$ if viewed
as a function of the newly introduced $\mu$ and $\pi$ measures. As such we propose that it is possible to define a routing indicator that accounts for the strengths of the marginals and peak demands and use this information to choose the appropriate routing template. This is an important observation because it suggests a new strategy for service providers faced with designing for changing demands. In addition to just using marginal demands, they can sample the demand space to determine the relative strengths of the peak and marginal demands. They may
 use this to  choose an initial routing template and if necessary, evolve the  template to obtain/maintain
  cost-efficient routing. This further justifies the need to measure link level traffic, both peak and average demand, since it can feedback into the selection of an optimal routing template  for a given network topology.

An important consideration is the sampling interval. The impact of the sampling interval can be clearly seen by the evolution of the real data points in Figures 4 and 5. Each such data point comes from a compilation of the real traffic matrices over a certain period of time, as labelled next to the points. As the interval increases from one week to four, we see the capped hose model shift toward a more hose-like universe, with stronger marginal (and consequently weaker peak demands).
If samples are over a short time horizon, then the demand matrix does not change significantly and as such we are designing  for a single demand. In this scenario,  $\sp$ yields the cheapest routing template. However as the time horizon gets longer, the different demand matrices could be significantly different and in such scenarios the more flexible design of $\hh$/$\vpn$ brings in the added cost benefits.

\subsection{Single versus Multi Hub Routing Templates}

An interesting observation of this study is that there are significant cost benefits for using a multi-hub routing template versus the single-hub templates which yield optimal $\vpn$s. Table \ref{tab:hubcount} shows the number of traffic instances where $\hh$ is more cost effective than a single hub $\vpn$ solution.
It is straightforward to see that a best solution induced by any hub tree is always  as good (in terms of total cost) as a single-hub routing. Indeed, mapping all internal nodes of a hub tree to a single network node is always a valid hub placement that yields the single-hub $\vpn$-like routing. Since the hub placement algorithm \cite{olver10approx} is optimal, it is guaranteed to find a hub placement that is as good as the best single-hub routing.

\begin{table}
\centering
\caption{Number of Instances where $\hh$ Outperforms $\sp$}
\begin{tabular}{c||ccc}
Network & $\hh$ Optimal & $\hh$ Optimal\\
				 & Multiple Hubs & Single Hub \\
\hline
Abilene  {(\scriptsize total number of instances: 6561)}  &    3045           &			8		  \\
Telstra {(\scriptsize total number of instances: 353)} &        246               &			0		
\end{tabular}
\label{tab:hubcount}
\end{table}

\subsection{Routing Costs and Impact on Network Design}
We evaluate the cost-effectiveness of the $\hh$ template against $\sp$ using both link and node costs. Link or edge capacity costs for $\hh$ are computed by solving a maximization problem over the universe of traffic matrices. For node costs we use the maximum possible link capacities that are incident on a particular node. The intuition behind using these maximum incident capacities is based on the hardware port requirements for routing/switching the total traffic through that node. However it is very rare for the all incident links at the node to simultaneously carry their maximum allowable traffic rates.

From Figures 4 and 6 for link and node costs respectively, we see that $\hh$ shows significant cost savings for link costs since we take advantage of dense traffic sharing points through appropriate location of hub nodes. However, since the node costs are being evaluated using physical link capacities the gains are not as significant.

\begin{figure}
\centering
\includegraphics[scale=0.45]{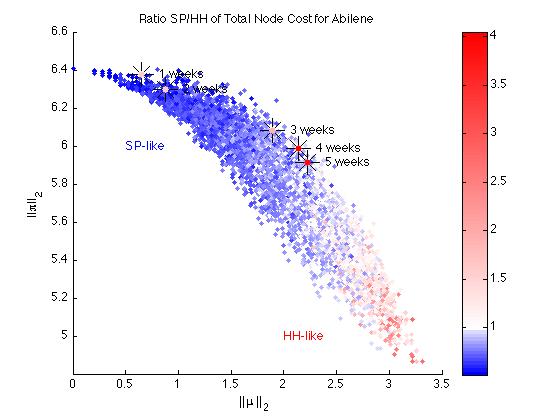}
\caption{The ratio between the node cost of the (optimal) $\sp$ routing and the $\hh$ routing found with our heuristic plotted against the norms of the peak demand and marginal strength vectors for varying traffic, and the Abilene network. Data points from the time-series of traffic matrices added, with duration considered as label. \emph{Better viewed in color.}}
\end{figure}

\section{Conclusion}
The capped hose model introduced in this paper shines light on the applicability of current routing template archetypes ($\hub$ and $\sp$) parametrized by the marginal and peak demands of the given traffic matrices. We have shown that using the capped hose model provides additional flexibility to network engineers for cost-optimal network design and shows the relevance of the newly introduced $\hh$ templates. In particular, this means that by adding peak capacities $U(i,j)$ into the hose model, the single-hub tree routing template is no longer cost-effective. Moreover, the peak demand and marginal strengths presented here are a first attempt at choosing routing templates based solely on traffic information. Finally, the $\rndhh$ problem is of theoretical interest by itself, encompassing many other well-known algorithmic problems, and is open to further investigation.

\bibliographystyle{IEEEtran}
\bibliography{paperbib}

\end{document}